\documentclass[conference]{IEEEtran}
\IEEEoverridecommandlockouts    

\usepackage{multirow}
\usepackage{lipsum}
\usepackage{amsmath}
\usepackage{amssymb}

\usepackage{amsthm}
\newtheorem{assumption}{Assumption}
\usepackage[ruled,vlined]{algorithm2e}
\usepackage{graphicx}
\usepackage{booktabs}
\usepackage{tabularx}
\usepackage{float}
\usepackage[compatibility=false]{caption}
\usepackage{subcaption}
\usepackage{xcolor}
\usepackage{mathtools}
\newtheorem{remark}{Remark}

\newtheorem{theorem}{Theorem}
\newtheorem{lemma}{Lemma}
\newtheorem{proposition}{Proposition}
\graphicspath{{./Figures/}}

\usepackage{balance}

\usepackage{xcolor}
\usepackage{amsmath}

\DeclareMathOperator*{\argmin}{arg\,min}

\newcommand{\ch}[1]{\textcolor{black}{#1}}

\newcommand{\sr}[1]{\textcolor{black}{#1}}

\title{\vspace{0.25in}\LARGE \bf On the Input Sensitivity of METANET Models and\\the Robustness of Dynamic Calibration}

\author{
	\parbox{\textwidth}{%
		\centering
		Cameron Hickert$^{*}$, Shreyaa Raghavan$^{*}$, Cathy Wu%
	}%
	\thanks{$^{*}$These authors contributed equally to this work.}%
    \thanks{The authors are with the Laboratory for Information \& Decision Systems and the Institute for Data, Systems, and Society, Massachusetts Institute of Technology, Cambridge, MA 02139, USA. C. Wu is also with the Dept. of Civil and Environmental Engineering. {(Emails: \tt\small shreyaar@mit.edu; chickert@mit.edu; cathywu@mit.edu})}%
    \thanks{\copyright~2026 IEEE. Personal use of this material is permitted. Personal use of this material is permitted. Permission must be obtained for all other uses, including reprinting/republishing this material for advertising or promotional purposes, collecting new collected works for resale or redistribution to servers or lists, or reuse of any copyrighted component of this work in other works.}%
}

\begin{document}
	
	\maketitle
	\thispagestyle{empty}
	\pagestyle{empty}
	
	\begin{abstract}
        While traffic modeling and control rely upon effective calibration of macroscopic traffic simulation models like METANET, recent empirical work shows these models can exhibit severe sensitivity to input noise. This paper explains this phenomenon through a string-stability analysis, demonstrating that a calibrated METANET model can amplify small additive perturbations to boundary conditions along the corridor, causing the simulated state to diverge from the nominal baseline. The input sensitivity in off-nominal cases may compromise the model's capabilities for counterfactual analysis, which is required for use cases of interest like the design of large-scale variable speed limits. To address these issues, this work demonstrates how a dynamic, time-varying approach to calibrate model parameters can achieve robustness and improved accuracy of the resulting macrosimulation. We show analytically that under stated regularity and perturbation assumptions, dynamic calibration achieves a tighter cost deviation bound than static calibration, and we validate this behavior in highway environments with synthetic and real-world data. Ultimately, this can enable more trustworthy traffic simulation and more effective evaluation of control strategies.
	\end{abstract}
	
\section{Introduction}
\label{sec:introduction}

Effective control of large-scale traffic networks relies on accurate, well-calibrated macroscopic simulations \cite{papageorgiou2019role}. By approximating vehicle movements as continuous fluid flows, models like METANET provide the computational tractability and scalability for large-scale traffic network control and simulation use cases. While METANET is widely utilized for traffic state estimation and control \cite{kotsialos2002traffic, mohammadian2021performance, messmer1990metanet}, the validity of any downstream control strategy hinges on robust and accurate calibration.

Although momentum-based models like METANET can reproduce physical phenomena such as stop-and-go waves, their sensitivity to boundary conditions like upstream flow or downstream density remains a challenge \cite{mohammadian2021performance}. \ch{Despite this,} standard static calibration is \ch{often} treated as a one-time estimation procedure on historical data, which risks overfitting and neglects stress testing. Consequently, \sr{minute perturbations, or additive input noise}, can undergo severe spatial amplification, \sr{causing the simulation to} irrecoverably diverge from the baseline \cite{raghavan2026dynamic}. This \sr{input sensitivity} \ch{can} drastically compromise the simulation's value for counterfactual evaluation and traffic control.

Recent empirical work demonstrates that shifting to time-varying, dynamic parameter calibration \ch{may yield} more robust METANET models \ch{for stop-and-go traffic phenomena} \cite{raghavan2026dynamic}. \ch{Our} paper \ch{proposes an analytical explanation for} why this is the case \ch{through} formal mathematical analysis regarding METANET's \ch{sensitivity to input perturbations}. By formulating calibration as a finite-horizon optimal control problem, we derive bounds demonstrating dynamic calibration's robustness and accuracy advantages and validate them numerically.

This work makes three primary contributions:
\begin{itemize}
    \item \textbf{Formalizing Input Sensitivity:} Using string stability analysis, we characterize METANET's sensitivity in the face of exogenous boundary condition perturbations, such as that due to sensor noise.
    \item \textbf{Analytical Bounding:} We derive an analytical bound on cost deviation from boundary noise, mathematically proving that dynamic calibration achieves a tighter upper bound. Furthermore, we show that the performance advantage of dynamic over static calibration scales with ground truth state variance.
    \item \textbf{Empirical Validation:} We validate these theoretical findings using numerical experiments on both synthetic scenarios and real-world I-24 MOTION trajectory data.
\end{itemize}

The remainder of this paper is organized as follows. The relevant literature is described in Section \ref{sec:related_work}. Section \ref{sec:prob_form} \ch{describes METANET and} formulates calibration as a finite-horizon optimal control problem. Section \ref{sec:string_instability} establishes METANET's \ch{input sensitivity} to data perturbations. Section \ref{sec:analysis} derives analytical bounds on dynamic calibration robustness. Section \ref{sec:experiments} presents experimental validation. Finally, Section \ref{sec:conclusion} concludes the paper.

\section{Related Work}
\label{sec:related_work}

Although numerous METANET variants have been proposed, they are largely evaluated in specific settings with fixed data inputs -- where parameters are prone to overfitting -- or on synthetic datasets where underlying parameters are known to be time-invariant. However, because driving behavior inherently varies over time, online and dynamic calibration methods have been explored in dynamic traffic assignment \cite{antoniou2009off} and for \ch{density-only} macroscopic models \cite{thonhofer2014online}. More recently, empirical studies on \ch{momentum-based} models \ch{(which couple density and velocity)} have revealed that a statically calibrated METANET may struggle with boundary variations, as minor input perturbations can quickly cascade into catastrophic instabilities within the simulated traffic state \cite{mohammadian2021performance, raghavan2026dynamic}. By recasting a one-time static estimation into a dynamic, rolling-horizon control problem, \cite{raghavan2026dynamic} demonstrated substantial empirical improvements in robustness and predictive accuracy. Our work seeks to rigorously explain the mechanisms driving this surprising robustness. By formalizing the \ch{sensitivity} of METANET to exogenous data perturbations, we fill a theoretical gap. This understanding operates in parallel with broader efforts to enhance static calibration, uniting these two lines of inquiry to inform the development of future, highly reliable calibration methods \cite{huang2025metanet}.

To achieve this, our work builds upon a rich history of formal stability analysis in traffic flow, shifting the focus from physical phenomena to calibration robustness. Previous efforts have utilized Lyapunov stability to analyze microscopic models \cite{palatella2013nonlinear} and macroscopic physical control strategies \cite{friedrich2023lyapunov}. Similarly, formal nonlinear dynamics and bifurcation analyses have been used to prove macroscopic instability driven by physical road features, such as varying slopes \cite{cen2023global}. In a particularly relevant study, \cite{yi2003stability} provides a nonlinear stability analysis of \ch{momentum-based} Payne-Whitham models, employing wavefront expansion to assess how models handle large, internal shocks. However, this work evaluates theoretical models under the assumption of fixed, known parameters and defines stability via internal spatial gradients. Rather than analyzing internal shocks, our work \ch{shows how exogenous boundary perturbations can catastrophically amplify across METANET segments}, and we provide analytical bounds to mathematically \ch{demonstrate how} dynamic calibration can better handle these perturbed scenarios over standard static approaches.

\section{Problem Formulation}
\label{sec:prob_form}

\subsection{Traffic Dynamics Model (METANET)}

Here, we describe the discrete-time macroscopic traffic dynamics model METANET \cite{messmer1990metanet}. We consider a highway divided into $N$ segments, each of length $L$, and a simulation horizon of $H$ with a simulation time step of $T$ seconds. The macroscopic traffic state of a segment $\ell$ is determined by its density $\rho_t(\ell)$ and average velocity $v_t(\ell)$ at time $t$. The aggregate network state is $x_t = [v_{t}(1), \ldots, v_{t}(N), \rho_{t}(1), \ldots,  \rho_{t}(N)]^\top \in \mathbb{R}^{2N}$.

The METANET model describes the evolution of the traffic state $x_t$ over time, subject to the initial state $x_0$ and boundary conditions $d_t \in \mathbb{R}^m$ -- such as on-ramp flows, upstream boundary flow, downstream density, etc., where $m$ is the number of inputs -- through the following equations:
\begin{align}
        v_{t+1}(\ell) &= v_{t}(\ell) + \frac{T}{\tau} (V[\rho_{t}(\ell)] - v_{t}(\ell))  \\
        &  \qquad \quad \hspace{0.4em}+ \frac{T}{L} \, v_{t}(\ell) \left( v_{t}(\ell-1)- v_{t}(\ell)\right) \notag \\
        & \qquad \quad \hspace{0.4em} - \frac{\eta T}{\tau L} \frac{\rho_{t}(\ell+1) - \rho_{t}(\ell)}{\rho_{t}(\ell) + \kappa} \notag, \\
        V[\rho_{t}(\ell)] &=  v_{free} \exp \left[-\left(\frac{\rho_{t}(\ell)}{a \, \rho_{cr}}\right)^{a}\right] \label{V},
\end{align}
\begin{align}
        q_{t}(\ell) &= \rho_{t}(\ell) v_{t}(\ell) \lambda_\ell \label{flow}, \\
                \rho_{t+1}(\ell) &= \rho_{t}(\ell) + \frac{T}{L \lambda_\ell} \left(q_{t}(\ell-1) - \frac{q_{t}(\ell)}{1 - \beta_{\ell}} + r_t(\ell) \right),
        \label{density}
\end{align}

where $\lambda_{\ell}$ is the number of lanes on segment $\ell$, $q_{t}(\ell)$ is segment $\ell$'s inflow at time $t$, $\beta_{\ell}$ is the off-ramp turning ratio at segment $\ell$, $r_t(\ell)$ is the on-ramp inflow for segment $\ell$ at time $t$, and $[\tau,\, \eta,\, \kappa,\, v_{free}, \, a,\, \rho_{\rm cr},]$  are the parameters that require calibration. The latter three are the parameters that describe the (exponential) fundamental diagram, as shown in equation \ref{V}. We write the full state transition as
\begin{equation}
  x_{t+1} = F(x_t,\, d_t;\, \theta),
  \label{eq:dynamics}
\end{equation}
where $\theta \in \mathcal{B} \subset \mathbb{R}^n$ is the vector of METANET parameters to be calibrated, i.e.,
$\theta = [\tau,\, \eta,\, \kappa,\, v_{free}, \, a,\, \rho_{\rm cr},]^\top$,
and $\mathcal{B}$ denotes the set of physically admissible parameter values.

We define a perturbation on the boundary conditions $\varepsilon_t \in \mathbb{R}^m$ that shifts the nominal conditions to the perturbed conditions $\tilde{d}_t$:
\begin{equation}
    \tilde{d}_t = d_t + \varepsilon_t, \qquad \|\varepsilon_t\| \leq \bar{\varepsilon} ,
\end{equation}
where the perturbations are drawn from the uncertainty set
$U = \{\varepsilon \mid \|\varepsilon_t\| \leq \bar{\varepsilon} \ \forall\, t\}$. Under the nominal conditions, the METANET state evolves as
\begin{equation}
    x_{t+1} = f(x_t,\, d_t\,;\, \theta_t),
\end{equation}
while under the perturbed conditions, the state evolves as
\begin{equation}
    \tilde{x}_{t+1} = f(\tilde{x}_t,\, \tilde{d}_t\,;\, \theta_t),
\end{equation}
where the same parameter $\theta_t$ is applied. Note that parameters are not \ch{updated in response} to the perturbation. We also define the state deviation at time $t$ as
\begin{equation}
    e_t(\theta_t, \varepsilon) := \tilde{x}_t(\theta_t, \varepsilon) - x_t(\theta_t)
\end{equation}
Here, we specify the state variables with the parameters $\theta_t$ and perturbation $\varepsilon_t$ to denote the traffic state that is generated by those conditions. For brevity, we later refer to  $e_t(\theta_t, \varepsilon)$ as $e_t$, $\tilde{x}_t(\theta_t, \varepsilon)$ as $\tilde{x}_t$, and $x_t(\theta_t)$ as $x_t$. Since both systems share the same initial condition $\tilde{x}_0 = x_0$,
we have $e_0 = 0$. 

\subsection{Calibration as a Finite-Horizon Optimal Control Problem}

The goal of calibrating METANET, or any traffic macrosimulation, is to find parameters that make the state trajectory closely match ground-truth traffic data. Following~\cite{raghavan2026dynamic}, we treat the parameters as time-varying and minimize the objective
\begin{align}
J(\theta_{0:H-1})
  &= \sum_{t=0}^{H} c_t\!\left(x_t(\theta_{0:t})\right) \\
  &= \sum_{t=0}^{H} \left\| x_t(\theta_{0:t}) - x^{\mathrm{gt}}_t \right\|^2,
\end{align}
where $\theta_{0:t}$ collects the parameters up to time $t$, $x_t(\theta_{0:t})$ is the state trajectory induced by $\theta_{0:t}$, and $c_t$ is the quadratic stage cost measuring the squared deviation from the ground-truth state $x^{\mathrm{gt}}_t$. Calibration is then the finite-horizon optimal control problem
\begin{equation}
\min_{\theta_{0:H-1}\in\mathcal{B}^{H}} J(\theta_{0:H-1})
\quad \text{s.t.} \quad \eqref{eq:dynamics},\ x_0 = x^{\mathrm{gt}}_0.
\label{eq:calibration-ocp}
\end{equation}

We write $\bar\theta \in \mathcal{B}$ for a \emph{static} parameter vector, in which the parameters are fixed to a single value over the entire horizon, and $\theta^{*}_{0:H-1}$ for an optimal solution of~\eqref{eq:calibration-ocp}. 

To quantify how much a static $\bar\theta$ loses at stage $t$ relative to acting optimally at the current state, we define the per-stage
\emph{advantage function}
\begin{equation}
A_t(x, \bar\theta)
  := Q_t(x, \bar\theta) - \min_{\theta \in \mathcal{B}} Q_t(x, \theta)
  \ \ge\ 0,
\label{eq:advantage}
\end{equation}
where $Q_t(x, \theta)$ denotes the standard $Q$-function: the total cost incurred by applying $\theta$ at stage $t$ from state $x$ and following an optimal continuation thereafter. Note that $\theta^*_t(x) \;=\; \arg\min_{\theta \in \mathcal{B}}\; Q_t(x,\, \theta)$, and equality in~\eqref{eq:advantage} holds if and only if $\bar \theta = \theta^*_t$.

\section{METANET's Input Sensitivity via String Stability Analysis}
\label{sec:string_instability}

To isolate and quantify \ch{divergence due to additive input noise} $\varepsilon$, we evaluate the system through the lens of discrete-time string stability. By treating the METANET segment update equations as a forced linear system, we decouple the unforced local dynamics \ch{(internal segment dynamics)} from the upstream boundary conditions. 
This is achieved by linearizing the \ch{discrete-time METANET update map} around an equilibrium state $(\rho_e, v_e)$ \ch{under the assumption of small-signal perturbations,} to yield the local state Jacobian matrix $\mathbf{A}$ and the upstream state Jacobian matrix $\mathbf{B}$. 
\ch{The linearization is performed about a spatially uniform equilibrium defined by \(\rho_i=\rho_{\mathrm{e}}\) and \(v_i=v_{\mathrm{eq}}=V(\rho_{\mathrm{e}})\) for all cells \(i\), with \(\rho_{\mathrm{e}}=35\) chosen to represent a congested regime just above the critical density (\(\rho_{\mathrm{cr}}=30\)).} \sr{The results should thus be interpreted as specific to these operating conditions, rather than characterizing the model's stability across all traffic conditions.}

Matrix $\mathbf{A}$ captures the effect of the segment's current state on its own future state. It contains the partial derivatives of the state update equations with respect to the local segment variables:
\begin{equation} \label{eq:jacobian_A}
\mathbf{A} = 
\begin{bmatrix}
1 - \frac{T}{L}v_e & -\frac{T}{L}\rho_e \\[1em]
\frac{T}{\tau}V'(\rho_e) + \frac{\nu T}{\tau L (\rho_e + \kappa)} & 1 - \frac{T}{\tau} - \frac{T}{L}v_e
\end{bmatrix}
\end{equation}

\begin{figure}[t]
    \centering
    \includegraphics[width=0.45\textwidth]{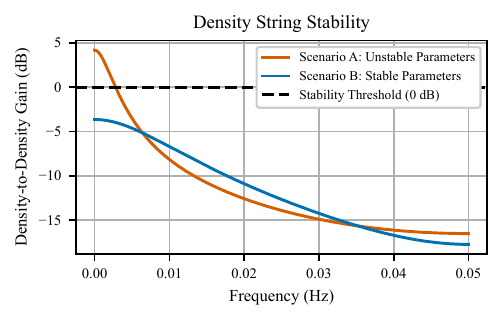}
    \caption{Density string stability in the frequency domain. The density-to-density gain of upstream boundary disturbances for two different realistic parameter sets. Gain values exceeding the 0 dB threshold indicate spatial amplification of the perturbation as it propagates to downstream segments.}
    \label{fig:string_stability}
\end{figure}

Matrix $\mathbf{B}$ captures the effect of the upstream segment's state on the local segment's future state. It contains the partial derivatives with respect to the upstream segment variables. Because the local speed update $v_i(k+1)$ in the METANET momentum equation does not depend directly on the upstream density $\rho_{i-1}(k)$, the term in the lower-left corner evaluates to zero:
\begin{equation} \label{eq:jacobian_B}
\mathbf{B} = 
\begin{bmatrix}
\frac{T}{L}v_e & \frac{T}{L}\rho_e \\[1em]
0 & \frac{T}{L}v_e
\end{bmatrix}
\end{equation}

\ch{Transforming the linearized state-space system defined by Jacobians \eqref{eq:jacobian_A} and \eqref{eq:jacobian_B} into the frequency domain via the $z$-transform yields the spatial transfer function:}
\begin{equation}
G(z) = (zI - \mathbf{A})^{-1}\mathbf{B}.
\end{equation}

This formulation addresses the boundary sensitivity problem. \ch{By extracting the density-to-density components of this transfer function, we isolate the system's response to perturbations in vehicle inflow (e.g., due to sensor noise).}
\ch{Note that while METANET is bidirectional due to the downstream anticipation term containing $\rho_t(l+1)$, this transfer function isolates the downstream propagation of upstream boundary noise to evaluate string stability.} 
Analyzing the frequency response of this specific subsystem allows us to compute the spatial gain of boundary noise. When the magnitude of this gain exceeds 0 dB at any frequency, it provides formal mathematical proof that boundary disturbances undergo amplification. 

Figure \ref{fig:string_stability} illustrates this frequency-dependent spatial amplification for two different sets of \ch{values from standard parameter bounds} \cite{zhao2025bounded}. As shown, the stability characteristics diverge drastically at low frequencies, which correspond to long-period macroscopic traffic waves. Scenario A (the more unstable configuration) exhibits a positive gain -- peaking near 4 dB as the frequency approaches 0 Hz -- indicating that it exponentially amplifies low-frequency boundary noise as it propagates downstream. Conversely, Scenario B remains strictly below the 0 dB threshold across the entire frequency spectrum, demonstrating robust string stability where boundary perturbations are continuously attenuated. Scenario A's massive amplification (a 4dB amplification means the disturbance grows by $\sim$58.5\% per segment) of low-frequency waves ultimately dominates, \ch{forcing the state to deviate from the nominal baseline}. Consequently, this approach explicitly links the foundational parameters of the METANET model to the empirical observation that boundary noise can exponentially propagate downstream, driving the irrecoverable divergence between perturbed and unperturbed simulations. 

This is particularly relevant for \ch{calibration with} real-world \ch{data}; empirical calibrations \ch{can} yield heterogeneous parameter sets where specific, highly unstable segments act as spatial ``triggers,'' cascading localized boundary noise into catastrophic, corridor-wide simulation divergence~\cite{raghavan2026dynamic}. 

\section{Analysis: Advantage of Dynamic Calibration}
\label{sec:analysis}

\sr{Section~\ref{sec:string_instability} identifies an amplification mechanism underlying METANET's input sensitivity; this section quantifies its consequence via a finite-horizon cost-deviation bound. We first show that the cost advantage from any static parameter vector to optimal dynamic parameters scales with the temporal variance of the ground-truth state. We then show that under small input perturbations, dynamic parameters yield a tighter cost-deviation bound, driven by the lower nominal cost.}

\begin{remark}
    \label{rem:perfdiff}
The total cost of the optimal dynamic parameters will be less than or equal to any static parameter vector $\bar \theta \in \mathcal{B}$:
\begin{equation}
    J(\bar\theta) - J(\theta_{0:H-1}^*) \geq 0
\end{equation}
\end{remark}

Applying the Performance Difference Lemma~\cite{kakade2002approximately} 
to the static parameter $\bar\theta$ and the optimal time-varying parameters 
$\theta^*_{0:H-1}$, the total cost gap decomposes as
\begin{equation}
    J(\bar\theta) - J(\theta_{0:H-1}^*) 
    = \sum_{t=0}^{H-1} A_t(x_t^{\bar\theta},\, \bar\theta),
\end{equation}
where $x_t^{\bar\theta}$ is the state trajectory induced by $\bar\theta$ 
starting at $x_0^{\rm gt}$. The result follows immediately since 
$A_t(x, \bar\theta) \geq 0$ for all $x, \bar\theta$ by 
definition~\eqref{eq:advantage}.

\subsection{Effect of Ground Truth Variance}

To establish the relationship between the advantage of dynamic calibration and the temporal variance of the ground truth traffic state, we first introduce two standard regularity assumptions regarding the cost landscape and the sensitivity of the model parameters.

\begin{assumption}[Local Quadratic Growth and Saturation]
\label{assum:convexity}
For any time step $t$ and state $x$, the action-value function $Q_t(x, \theta)$ exhibits local quadratic growth away from the optimum with a curvature parameter $\mu > 0$, bounded by a maximum cost discrepancy $D > 0$. This implies that for any parameter vector $\theta$ and the optimal parameter $\theta^*_t(x) = \argmin_{\theta'} Q_t(x, \theta')$, the following lower bound holds:
\begin{equation}
    Q_t(x, \theta) - Q_t(x, \theta^*_t(x)) \geq \min \left( \frac{\mu}{2} \|\theta - \theta^*_t(x)\|^2, D \right).
\end{equation}
\end{assumption}

\begin{assumption}[Parameter-State Sensitivity]
\label{assum:sensitivity}
For the trajectory of states $x_t^{\bar{\theta}}$ induced by a static parameter $\bar{\theta}$, there exists a sensitivity constant $L > 0$ such that the deviation of the optimal dynamic parameter at time $t$, $\theta^*_t(x_t^{\bar{\theta}})$, from the static parameter is bounded by the deviation of the ground truth state from its temporal mean $\bar{x}^{gt} = \frac{1}{H}\sum_{\tau=0}^{H-1} x_\tau^{gt}$:
\begin{equation}
    \|\bar{\theta} - \theta^*_t(x_t^{\bar{\theta}})\|^2 \geq L^2 \|x_t^{gt} - \bar{x}^{gt}\|^2.
\end{equation}
\end{assumption}

\ch{While Assumption 2 is a strong structural condition, it formalizes a} core intuition of macroscopic traffic simulation (e.g., METANET): a single static parameter \sr{will struggle to} simultaneously represent highly disparate traffic regimes (like free-flow and severe congestion). \sr{When the ground truth is exactly representable by a static parameter, the sensitivity constant $L$ collapses to 0, and Proposition 1 below reduces to Remark 1's bound, where the optimal time-varying parameters show no advantage. However, with real-world traffic, where METANET is modeling more complex dynamics, this assumption is likely satisfied}. Furthermore, Assumption \ref{assum:convexity} realistically models that extreme parameter divergences plateau in their penalization once traffic states reach gridlock. With these assumptions, we can formally bound the advantage of dynamic calibration relative to the variance of the ground truth state.

\begin{proposition} \label{prop:var}
    Denote the truncated temporal variance of the ground truth traffic state over the horizon $T$ as $\text{Var}_{D}(x^{gt}) = \frac{1}{H} \sum_{t=0}^{H-1} \min \left( \|x_t^{gt} - \bar{x}^{gt}\|^2, \frac{2D}{\mu L^2} \right)$. Under Assumptions 1 and 2, the total expected performance advantage of the optimal time-varying parameters $\theta_{0:H-1}^*$ over any static parameter $\bar{\theta}$ is lower-bounded by:
\begin{equation}
     J(\bar{\theta}) - J(\theta_{0:H-1}^*) \geq \frac{\mu L^2 H}{2} \text{Var}_{D}(x^{gt}).
\end{equation}
\end{proposition}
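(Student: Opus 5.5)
The plan is to start from the Performance Difference Lemma decomposition used in Remark~\ref{rem:perfdiff}:
\begin{equation*}
J(\bar\theta) - J(\theta^*_{0:H-1}) = \sum_{t=0}^{H-1} A_t(x_t^{\bar\theta},\bar\theta),
\end{equation*}
and then lower-bound each advantage term separately. By definition~\eqref{eq:advantage}, $A_t(x_t^{\bar\theta},\bar\theta) = Q_t(x_t^{\bar\theta},\bar\theta) - Q_t(x_t^{\bar\theta},\theta^*_t(x_t^{\bar\theta}))$. This is exactly the left-hand side of Assumption~\ref{assum:convexity}, evaluated at $x = x_t^{\bar\theta}$ and $\theta=\bar\theta$. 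Hence
\begin{equation*}
A_t(x_t^{\bar\theta},\bar\theta) \ge \min\left(\tfrac{\mu}{2}\|\bar\theta - \theta^*_t(x_t^{\bar\theta})\|^2,\, D\right).
\end{equation*}

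Next I will bring in Assumption~\ref{assum:sensitivity}. The map $s\mapsto \min(\tfrac{\mu}{2}s, D)$ is nondecreasing in $s$. So I can replace $\|\bar\theta - \theta^*_t(x_t^{\bar\theta})\|^2$ by its lower bound $L^2\|x_t^{gt}-\bar x^{gt}\|^2$ and obtain
\begin{equation*}
A_t \ge \min\left(\tfrac{\mu L^2}{2}\|x_t^{gt}-\bar x^{gt}\|^2,\, D\right) = \tfrac{\mu L^2}{2}\min\left(\|x_t^{gt}-\bar x^{gt}\|^2,\, \tfrac{2D}{\mu L^2}\right).
\end{equation*}
The equality is just factoring out the positive constant $\mu L^2/2$. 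This is where $L>0$ is needed. If $L=0$, the bound degenerates and we fall back on Remark~\ref{rem:perfdiff}, as the text already notes.

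Summing over $t=0,\dots,H-1$ and multiplying and dividing by $H$ gives
\begin{equation*}
J(\bar\theta) - J(\theta^*_{0:H-1}) \ge \tfrac{\mu L^2 H}{2}\,\text{Var}_D(x^{gt}),
\end{equation*}
which is the claim.

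The only step that needs care is monotonicity: the lower bound on the parameter gap must pass through the saturating minimum in the correct direction. This works because both $\min(\cdot,D)$ and multiplication by $\mu/2>0$ are order-preserving.

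A second point to check is that the trajectory in the Performance Difference Lemma is the static trajectory $x_t^{\bar\theta}$. That is precisely the state at which Assumption~\ref{assum:sensitivity} is stated, so the two assumptions are applied at the same state and no further argument is needed. The word ``expected'' in the statement is harmless, since the setting is deterministic and the bound holds pointwise.
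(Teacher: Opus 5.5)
Your proposal is correct and follows the paper's proof step for step: the Performance Difference Lemma decomposition from Remark~\ref{rem:perfdiff}, then Assumption~\ref{assum:convexity} applied at $x_t^{\bar\theta}$, then monotonicity of the saturating minimum to insert Assumption~\ref{assum:sensitivity}, and finally summation with $\mu L^2/2$ factored out to form $\text{Var}_D(x^{gt})$. You also check explicitly that both assumptions are applied at the same static-trajectory state and that the factoring step needs $\mu L^2>0$; the paper's argument uses both points without stating them.
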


\begin{proof} 
From Remark~\ref{rem:perfdiff}, we established that the performance difference between a static parameter vector and the optimal dynamic parameters is exactly the cumulative sum of the advantage function along the state trajectory induced by $\bar{\theta}$:
\begin{equation}
       J(\bar{\theta}) - J(\theta_{0:H-1}^*) = \sum_{t=0}^{H-1} A_t(x_t^{\bar{\theta}}, \bar{\theta}) 
\end{equation}
By definition, the advantage function is the difference between the $Q$-value of the static parameter and the optimal value function:
\begin{equation}
    A_t(x_t^{\bar{\theta}}, \bar{\theta}) = Q_t(x_t^{\bar{\theta}}, \bar{\theta}) - V_t^*(x_t^{\bar{\theta}}).
\end{equation}
Since $V_t^*(x_t^{\bar{\theta}}) = \min_{\theta} Q_t(x_t^{\bar{\theta}}, \theta) = Q_t(x_t^{\bar{\theta}}, \theta_t^*(x_t^{\bar{\theta}}))$, we can rewrite the advantage function as:
\begin{equation}
    A_t(x_t^{\bar{\theta}}, \bar{\theta}) = Q_t(x_t^{\bar{\theta}}, \bar{\theta}) - Q_t(x_t^{\bar{\theta}}, \theta_t^*(x_t^{\bar{\theta}})).
\end{equation}
Applying Assumption \ref{assum:convexity} (Local Quadratic Growth), we lower-bound the instantaneous advantage function:
\begin{equation}
    A_t(x_t^{\bar{\theta}}, \bar{\theta}) \geq \min \left( \frac{\mu}{2} \|\bar{\theta} - \theta_t^*(x_t^{\bar{\theta}})\|^2, D \right).
\end{equation}
Next, we apply Assumption \ref{assum:sensitivity} (Parameter-State Sensitivity) to relate the parameter deviations to the ground truth traffic states. Since the $\min$ function is monotonically increasing with respect to its first argument, we have:
\begin{equation}
    A_t(x_t^{\bar{\theta}}, \bar{\theta}) \geq \min \left( \frac{\mu L^2}{2} \|x_t^{gt} - \bar{x}^{gt}\|^2, D \right).
\end{equation}
Summing this lower bound over the entire time horizon $H$ yields:
\begin{equation}
    \sum_{t=0}^{H-1} A_t(x_t^{\bar{\theta}}, \bar{\theta}) \geq \sum_{t=0}^{H-1} \min \left( \frac{\mu L^2}{2} \|x_t^{gt} - \bar{x}^{gt}\|^2, D \right).
\end{equation}
By factoring out $\frac{\mu L^2}{2}$, we can rewrite the summation in terms of the truncated variance $\text{Var}_{D}(x^{gt})$:
\begin{align}
    \sum_{t=0}^{H-1} A_t(x_t^{\bar{\theta}}, \bar{\theta}) &\geq \frac{\mu L^2 H}{2} \left[ \frac{1}{H} \sum_{t=0}^{H-1} \min \left( \|x_t^{gt} - \bar{x}^{gt}\|^2, \frac{2D}{\mu L^2} \right) \right] \nonumber \\
    &= \frac{\mu L^2 H}{2} \text{Var}_{D}(x^{gt}). 
\end{align}
Therefore, $J(\bar{\theta}) - J(\theta_{0:H-1}^*) \geq \frac{\mu L^2 H}{2} \text{Var}_{D}(x^{gt})$. This demonstrates that the suboptimality of using a static parameter $\bar{\theta}$ scales with the ground truth variance, but plateaus at extreme anomalies where parameter adjustments lose their marginal utility. \qed
\end{proof}

\subsection{Robustness to Data Perturbation}
Now, we derive an upper bound for the cost deviation when a parameter set $\theta$ is evaluated on perturbed boundary conditions $\tilde{d_t}$. This will analytically capture how a calibrated model reacts to off-nominal perturbations or noise. We introduce two assumptions to help formalize the dynamics of the state deviation. 

\begin{assumption}[First-Order Perturbation Approximation]
\label{ass:linearization}
The perturbation $\varepsilon$ is sufficiently small such that the 
state deviation evolves according to the first-order Taylor Expansion of $f$ around the  nominal point $(x_t, d_t)$:
\begin{equation}
    e_{t+1} = A_t e_t + B_t \varepsilon_t,
\end{equation}
where $A_t = \frac{\partial f}{\partial x}\big|_{(x_t,\, d_t;\, \theta_t)} 
\in \mathbb{R}^{2N \times 2N}$ is the state Jacobian and 
$B_t = \frac{\partial f}{\partial d}\big|_{(x_t,\, d_t;\, \theta_t)} \in \mathbb{R}^{2N \times m}$ is the 
boundary condition Jacobian, both evaluated along the nominal trajectory.
\end{assumption}

\sr{Note that the Jacobians $A_t$ and $B_t$ are different from matrices $\mathbf{A}$ and $\mathbf{B}$ in Section~\ref{sec:string_instability}. The latter are the equilibrium spatial Jacobians of a single segment update with respect to its own and its upstream neighbor's state, while $A_t$ and $B_t$ are the temporal Jacobians evaluated along the nominal trajectory $(x_t, d_t; \theta_t)$.}

\begin{assumption}[Bounded Jacobians]
\label{ass:bounded}
Define $\Phi_{j,i} := A_{j-1} A_{j-2} \cdots A_i$ for $i < j$ and $\Phi_{i,i} := \mathbf{I}$. There exist finite constants $M, \gamma > 0$ such that the product of state Jacobians satisfies
\begin{equation}
    \|\Phi_{j, i}\| \leq M \qquad \forall\; i, j \in [0, H],
\end{equation}
and the boundary condition Jacobian satisfies
\begin{equation}
    \|B_t\| \leq \gamma \qquad \forall\; t.
\end{equation}
\end{assumption}

Note that while $B_t$ is indexed by time in Assumption \ref{ass:bounded}, for METANET, $B_t$ is time invariant because it has the block structure
\begin{equation}
    B = \begin{pmatrix} 0_{N \times m} \\ B^\rho \end{pmatrix},
\end{equation}
where the upper block is zero since demand inputs do not appear directly 
in the speed equation, and the lower block $B^\rho \in \mathbb{R}^{N \times m}$ 
is a sparse matrix with value $\frac{T}{L\lambda_i}$ at $(i, j)$ if entry point $j$ is at segment $i$.
Since $B$ depends only on the fixed network geometry and simulation specifications 
and not on the traffic state or calibration parameters, it is 
time-invariant: $B_t = B$ for all $t$. The bound from Assumption \ref{ass:bounded}
is therefore
\begin{equation}
    \gamma = \max_{i \,\in\, \text{inflow entry}} 
    \frac{T}{L\lambda_i}.
\end{equation}

\begin{lemma}[State deviation bound]
\label{lem:state_deviation}
Under Assumptions~\ref{ass:linearization} and~\ref{ass:bounded}, the deviation between the nominal state $x_t$ and the perturbed state $\tilde{x}_t$ generated from boundary conditions $\varepsilon \in U$ satisfies
\begin{equation}
    \|e_t\| \leq \gamma \bar{\varepsilon} M t \qquad \forall\; t = 1, \ldots, H.
\end{equation}
\end{lemma}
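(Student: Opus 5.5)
The plan is to unroll the linear deviation recursion of Assumption~\ref{ass:linearization} from the zero initial condition $e_0 = 0$, then bound each term with Assumption~\ref{ass:bounded}.

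\textbf{Step 1: closed form.} By induction on $t$, I will show that
\begin{equation*}
e_t = \sum_{i=0}^{t-1} \Phi_{t,i+1} B_i \varepsilon_i, \qquad t = 1, \ldots, H,
\end{equation*}
with $\Phi_{j,i}$ as defined in Assumption~\ref{ass:bounded}.
\begin{itemize}
\item Base case ($t=1$): $e_1 = A_0 e_0 + B_0 \varepsilon_0 = B_0\varepsilon_0 = \Phi_{1,1} B_0 \varepsilon_0$, since $\Phi_{1,1} = \mathbf{I}$.
\item Inductive step: $e_{t+1} = A_t e_t + B_t \varepsilon_t$. Substituting the hypothesis and using $A_t \Phi_{t,i+1} = \Phi_{t+1,i+1}$ and $\Phi_{t+1,t+1} = \mathbf{I}$ gives the formula at $t+1$.
\end{itemize}

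\textbf{Step 2: bounding.} Apply the triangle inequality and submultiplicativity of the induced norm. Every index pair $(t, i+1)$ with $0 \le i \le t-1$ lies in $[0,H]$, so Assumption~\ref{ass:bounded} gives $\|\Phi_{t,i+1}\| \le M$ and $\|B_i\| \le \gamma$. Membership $\varepsilon \in U$ gives $\|\varepsilon_i\| \le \bar\varepsilon$. Each of the $t$ summands is therefore bounded by $M\gamma\bar\varepsilon$, so
\begin{equation*}
\|e_t\| \le \sum_{i=0}^{t-1} M \gamma \bar\varepsilon = \gamma \bar\varepsilon M t .
\end{equation*}

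The argument is routine; the only step needing care is the index bookkeeping in Step 1. One must check that the propagator attached to $\varepsilon_i$ is $\Phi_{t,i+1}$ and not $\Phi_{t,i}$, and that the $i = t-1$ term uses $\Phi_{t,t} = \mathbf{I}$. Consistency of norms also needs a remark: the matrix norm in Assumption~\ref{ass:bounded} should be the operator norm induced by the vector norm used in $U$, so that submultiplicativity applies.

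If the time-invariance $B_t = B$ noted for METANET is used, $\gamma$ can be taken as $\max_i T/(L\lambda_i)$. This does not change the argument.
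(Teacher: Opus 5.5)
Your proposal is correct and follows the paper's proof: the paper also unrolls the recursion by induction to $e_t = \sum_{k=1}^{t}\Phi_{t,k}B_{k-1}\varepsilon_{k-1}$, which is your formula with $k=i+1$, and then bounds each of the $t$ summands by $M\gamma\bar\varepsilon$ using the triangle inequality. Your remark that the matrix norm in Assumption~\ref{ass:bounded} must be the operator norm induced by the vector norm is a useful clarification that the paper leaves implicit.
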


\begin{proof}
Under Assumption~\ref{ass:linearization}, we unroll the linear recursion from 
$e_0 = 0$ using induction and show that $ e_t = \sum_{k=1}^{t} \Phi_{t,k}\, B_{k-1}\, \varepsilon_{k-1}$. For the base case:
\begin{align}
    e_1 &= A_0 \underbrace{e_0}_{=\,0} + B_0 \varepsilon_0 = B_0 \varepsilon_0.
\end{align}
Then, we assume that the statement holds for $e_t$ and show that it is true
for $e_{t+1}$. Using the linear recursion,
\begin{align}
    e_{t+1} &= A_t e_t + B_t \varepsilon_t \nonumber \\
            &= A_t \left( \sum_{k=1}^{t} \Phi_{t,k}\, B_{k-1}\, \varepsilon_{k-1} \right) + B_t \varepsilon_t \nonumber \\
            &= \sum_{k=1}^{t} \underbrace{A_t \Phi_{t,k}}_{=\, \Phi_{t+1,k}}\, B_{k-1}\, \varepsilon_{k-1} \;+\; \underbrace{\Phi_{t+1,\,t+1}}_{=\, I}\, B_t \varepsilon_t \nonumber \\
            &= \sum_{k=1}^{t+1} \Phi_{t+1,k}\, B_{k-1}\, \varepsilon_{k-1}.
\end{align}
Therefore, by induction, $ e_t = \sum_{k=1}^{t} \Phi_{t,k}\, B_{k-1}\, \varepsilon_{k-1}$. Taking norms and applying the triangle inequality,
\begin{align}
    \|e_{t}\| &\leq \sum_{k=1}^{t} \|\Phi_{t,k}\|\, \|B_{k-1}\|\, \|\varepsilon_{k-1}\| \nonumber \\
                &\leq \sum_{k=1}^{t} M \cdot \gamma \cdot \bar{\varepsilon} \nonumber \\
                &=  \gamma \bar{\varepsilon} M t \qquad \forall\; t = 1, \ldots, H,
\end{align}
where the second inequality applies Assumption~\ref{ass:bounded} ($\|\Phi_{j, i}\| \leq M$, $\|B_{k-1}\| \leq \gamma$) and the uncertainty bound ($\|\varepsilon_{k-1}\| \leq \bar{\varepsilon}$). \qed
\end{proof}

For convenience, define the shorthand
\begin{equation}
    G_t := \gamma \bar{\varepsilon} M t,
\end{equation}
so that $\|e_t\| \leq G_t$ by Lemma~\ref{lem:state_deviation}. We additionally define the nominal residual of parameter $\theta_t$ at stage $t$ as
\begin{equation}
    R_t(\theta_t) := \|x_t(\theta_t) - x_t^{gt}\|,
\end{equation}
which, similarly to the cost function, measures the error of the nominal state induced by a parameter vector $\theta_t$ with respect to the ground truth.

\begin{theorem}[Total cost deviation bound]
\label{thm:cost_deviation}
Under Assumptions~\ref{ass:linearization} and~\ref{ass:bounded}, the deviation 
in total cost under a perturbation $\varepsilon \in U$ satisfies
\begin{multline} \label{eq:cost_deviation_bound}
    \left| J_\varepsilon(\theta_{0:H-1}) - J(\theta_{0:H-1}) \right| \\
    \leq \sum_{t=1}^{H} G_t^2 + 2\sqrt{J(\theta_{0:H-1})} \cdot \sqrt{\sum_{t=1}^{H} G_t^2},
\end{multline}
where $G_t = \gamma \bar{\varepsilon} M t$ and $ J_\varepsilon(\theta_{0:H-1}) = \sum_{t=0}^{H} 
c(\tilde{x}_t, \theta_t)$.
\end{theorem}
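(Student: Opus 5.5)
The plan is to work stage by stage: expand the quadratic stage cost around the nominal trajectory, bound each stage using Lemma~\ref{lem:state_deviation}, and then aggregate over the horizon with Cauchy--Schwarz. Throughout, write $r_t := x_t - x_t^{gt}$, so that $\|r_t\| = R_t(\theta_t)$ and $J(\theta_{0:H-1}) = \sum_{t=0}^{H}\|r_t\|^2$. Since $\tilde{x}_t = x_t + e_t$, the perturbed stage cost expands exactly as
\begin{equation*}
\|\tilde{x}_t - x_t^{gt}\|^2 = \|r_t + e_t\|^2 = \|r_t\|^2 + 2\langle r_t, e_t\rangle + \|e_t\|^2 .
\end{equation*}
Summing over $t$ gives the identity
\begin{equation*}
J_\varepsilon(\theta_{0:H-1}) - J(\theta_{0:H-1}) = \sum_{t=0}^{H}\left(\|e_t\|^2 + 2\langle r_t, e_t\rangle\right).
\end{equation*}
The $t=0$ term vanishes because $e_0 = 0$, so the sum effectively runs over $t = 1,\ldots,H$.

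Next, take absolute values and apply the triangle inequality to get
\begin{equation*}
|J_\varepsilon - J| \le \sum_{t=1}^{H}\|e_t\|^2 + 2\sum_{t=1}^{H}|\langle r_t, e_t\rangle| .
\end{equation*}
For the first sum, Lemma~\ref{lem:state_deviation} gives $\|e_t\|\le G_t$, so $\sum_t \|e_t\|^2 \le \sum_{t=1}^H G_t^2$. For the cross term, I will first apply Cauchy--Schwarz in $\mathbb{R}^{2N}$, which gives $|\langle r_t, e_t\rangle| \le R_t\, G_t$. I will then apply Cauchy--Schwarz across time to obtain
\begin{equation*}
\sum_{t=1}^H R_t G_t \le \Big(\sum_{t=1}^H R_t^2\Big)^{1/2}\Big(\sum_{t=1}^H G_t^2\Big)^{1/2} \le \sqrt{J(\theta_{0:H-1})}\,\sqrt{\sum_{t=1}^H G_t^2}.
\end{equation*}
The last inequality holds because $\sum_{t=1}^H R_t^2 \le \sum_{t=0}^H R_t^2 = J$. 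Combining the two bounds yields exactly \eqref{eq:cost_deviation_bound}.

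There is no hard analytic step here. The main care point is interpretive: $J_\varepsilon$ must be read as the quadratic tracking cost $\sum_t \|\tilde{x}_t - x_t^{gt}\|^2$ against the same ground truth, with the same $\theta_t$ applied (no re-optimization, as stipulated in the formulation). One must also remember that the Lemma's bound, and hence $G_t$, is only available for $t \ge 1$, which is why $e_0 = 0$ is used to drop the $t=0$ term. It is also worth noting that the linearization of Assumption~\ref{ass:linearization} enters only through Lemma~\ref{lem:state_deviation}. The cost expansion itself is exact and needs no second-order approximation.
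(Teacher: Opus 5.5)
Your proof is correct and follows the paper's argument essentially step for step. It uses the exact expansion $\|r_t+e_t\|^2-\|r_t\|^2=\|e_t\|^2+2\langle r_t,e_t\rangle$, drops the $t=0$ term via $e_0=0$, bounds $\|e_t\|\le G_t$ by Lemma~\ref{lem:state_deviation}, and applies Cauchy--Schwarz first in $\mathbb{R}^{2N}$ and then across time. Your final step $\sum_{t=1}^{H}R_t^2\le J(\theta_{0:H-1})$ is slightly more careful than the paper's, which writes an equality that holds only because $x_0=x_0^{gt}$ forces $R_0=0$.
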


\begin{proof}
Recall that the total nominal cost is $J(\theta_{0:H-1}) = \sum_{t=0}^{H} c(x_t, \theta_t)$. The difference between the perturbed and nominal cost is
\begin{align}
    J_\varepsilon(\theta_{0:H-1}) - J(\theta_{0:H-1})
    &= \sum_{t=0}^{H} \left[ c(\tilde{x}_t, \theta_t) - c(x_t, \theta_t) \right].
\end{align}
Taking absolute values and applying the triangle inequality:
\begin{align}
    \left| J_\varepsilon(\theta_{0:H-1}) - J(\theta_{0:H-1}) \right|
    &\leq \sum_{t=0}^{H} \left| c(\tilde{x}_t, \theta_t) - c(x_t, \theta_t) \right|.
\end{align}
Since $e_0 = 0$ by assumption, the $t = 0$ term vanishes. For $t \geq 1$, 
we can substitute $\tilde{x}_t = x_t + e_t$ into the stage cost 
$c(x_t, \theta_t) = \|x_t - x_t^{gt}\|^2$:
\begin{align}
    c(\tilde{x}_t, \theta_t) - c(x_t, \theta_t)
     &= \|\tilde{x}_t - x_t^{gt} \|^2 - \|x_t - x_t^{gt}\|^2 \nonumber \\
    &= \|(x_t - x_t^{gt}) + e_t\|^2 - \|x_t - x_t^{gt}\|^2 \nonumber \\
    &= \|e_t\|^2 + 2(x_t - x_t^{gt})^T e_t.
\end{align}
Taking absolute values and applying the Cauchy-Schwarz inequality:
\begin{align}
    \left| c(\tilde{x}_t, \theta_t) - c(x_t, \theta_t) \right|
    &\leq \|e_t\|^2 + 2\, \|x_t - x_t^{gt}\|\, \|e_t\| \nonumber \\
    &= \|e_t\|^2 + 2\, R_t(\theta_t)\, \|e_t\|.
\end{align}
Substituting the state deviation bound $\|e_t\| \leq G_t$ from 
Lemma~\ref{lem:state_deviation} and summing over $t = 1, \ldots, T$:
\begin{align}
    \left| J_\varepsilon(\theta_{0:H-1}) - J(\theta_{0:H-1}) \right|
    &\leq \sum_{t=1}^{H} \left( \|e_t\|^2 + 2\, R_t(\theta_t)\, \|e_t\| \right) \notag \\
    &\leq \sum_{t=1}^{H} \left( G_t^2 + 2\, R_t(\theta_t)\, G_t \right). 
\end{align}

\begin{figure}[tbp]
    \centering
    \includegraphics[width=0.4\textwidth]{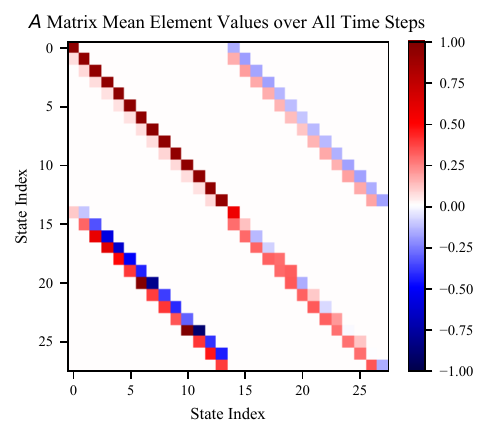}
    \caption{Mean A matrix for the synthetic scenario.}
    \label{fig:mean_a_matrix}
\end{figure}

Note that $\sum_{t=1}^{H} R_t(\theta_t) \, G_t$ can be rewritten in terms of 
the cost function as follows. By the Cauchy-Schwarz inequality:
\begin{align}
    \sum_{t=1}^{H} R_t(\theta_t) \, G_t
    &\leq \sqrt{\sum_{t=1}^{H} R_t(\theta_t)^2} \cdot \sqrt{\sum_{t=1}^{H} G_t^2} \nonumber \\
    &= \sqrt{J(\theta_{0:H-1})} \cdot \sqrt{\sum_{t=1}^{H} G_t^2},
\end{align}
Substituting back into the total cost deviation bound:
\begin{align}
    \left| J_\varepsilon(\theta_{0:H-1}) - J(\theta_{0:H-1}) \right|
    &\leq \sum_{t=1}^{H} G_t^2 + \nonumber \\ 
    & 2\sqrt{J(\theta_{0:H-1})} \cdot 
      \sqrt{\sum_{t=1}^{H} G_t^2} 
\end{align}
%

\qed
\end{proof}

The total cost deviation bound decomposes cleanly into two terms at each stage. The quadratic term 
$G_t^2$ captures the effect of the perturbation amplified through the 
dynamics, growing as $M^{2}$ and linearly with $t$. 

The first term will dominate for large $M$ and for large $\bar \varepsilon$. Note that $M$ is the maximum norm of the product of state Jacobian matrices, $A_t$. However, due to the sparsity and block \ch{tridiagonal} nature of $A_t$, which is inherent to the structure of METANET, the transition matrices $\Phi_{j,i}$ remain 
well-bounded in practice. Figure \ref{fig:mean_a_matrix} visualizes this structure for the mean of all $A_t$ \ch{from Section~\ref{subsec:casestudy_ii}}; in Section~\ref{case_study_syn} we show that $M$ remains fairly small for both $\theta^*$ and $\bar\theta$.

\sr{The second term captures the interaction between the nominal calibration error and the perturbations, showing that parameters with smaller nominal cost $ J(\theta_{0:H-1})$ yield a tighter bound. This, when combined with Remark \ref{rem:perfdiff} which shows that $J(\theta^*) < J(\bar \theta)$, confirms that in the small-perturbation regime, the optimal time-varying parameters $\theta^*$ yield a tighter cost-deviation upper bound than any static parameter $\bar \theta$.}

\section{Experiments} 
\label{sec:experiments}

\subsection{Methodology}
Per the method proposed in \cite{raghavan2026dynamic}, we solve the optimal control problem in equation \ref{eq:calibration-ocp} using Model Predictive Control. This provides an approximate solution for the optimal dynamic parameters $\theta^*$. For the static parameters, the control horizon is set to the time horizon of the entire simulation ($H$), and for the dynamic parameters, the control horizon is set to the length of the simulation timestep, $T$, with a prediction horizon of $6T$. For this work, we use $T=10$ seconds and $L = 400 $ meters to sufficiently satisfy the Courant-Friedrich Condition. IPOPT 3.14.16 was used as the optimization solver; it is known to demonstrate effectiveness for large-scale nonlinear optimizations problems and is thus commonly used in these settings \cite{biegler2009large}.

\subsection{Case Study I: Synthetic Scenario} \label{case_study_syn}

\begin{figure}[t]
    \centering
    \includegraphics[width=0.45\textwidth]{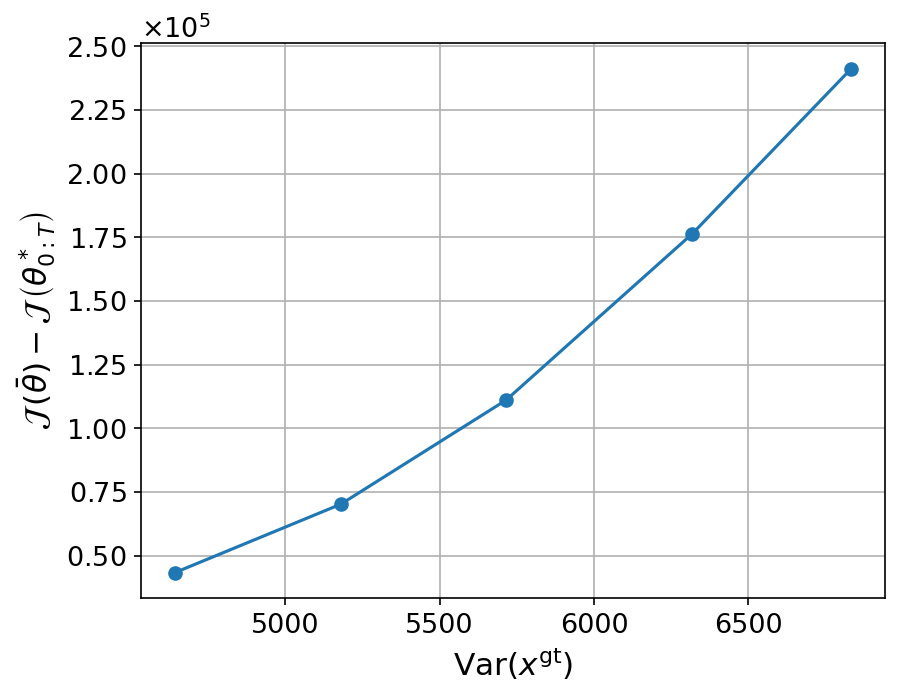}
    \caption{Advantage of dynamic parameters compared to static parameters vs variance of ground truth state. Note: $\sigma$ is mapped to the variance of the state itself, which is shown on the x-axis.}
    \label{fig:var_gt}
\end{figure}

To evaluate the validity of our variance and robustness bounds, we construct a  1-hour synthetic traffic scenario 
using the METANET model with a known ground truth parameter 
set from \cite{chavoshi2023feedback}. The scenario models a bottleneck on a 6 km highway, where the first 4 km have 4 lanes after which it drops down to 2 lanes. The upstream inflow spikes to 5500 veh/hr between minutes 5 to 20, outside of which it remains under capacity at 4000 veh/hr. All parameters are held fixed at their nominal values across the 
horizon, with the exception of the free-flow speed $v_{free}$, which is 
varied over time to simulate a recovering traffic condition, for 
example, the dissipation of an incident or the transition out of a 
morning peak period.

Specifically, $v_{free}$ is initialized at a reduced value of 
$v_{free}(0) = v_{free}^{\mathrm{nom}} \cdot (1 - \sigma)$, where 
$\sigma \in (0,1)$ is a variance parameter controlling the severity of 
the initial reduction. Over the first 45 minutes of the simulation 
horizon, $v_{free}$ increases linearly from this reduced value to the 
nominal free-flow speed of $120$ km/h, after which it remains constant 
for the final 15 minutes. The resulting ground truth trajectory 
$\{x_t^{gt}\}_{t=0}^{T}$ is then generated by simulating the METANET 
model forward using these time-varying parameters, and both the static 
and dynamic parameters are determined by calibrating to this data. The advantage of $\theta^*$ over $\bar \theta$ for different $\sigma$ values is shown in Figure \ref{fig:var_gt}, \sr{and the x-axis reports the variance of the ground truth data for each $\sigma$. The increasing trend across $\sigma$ empirically supports Proposition \ref{prop:var}, which bounds the dynamic parameter advantage from below by a quantity that grows with the temporal variance of the ground truth state.}

\begin{figure}[t]
    \centering
    \includegraphics[width=0.47\textwidth]{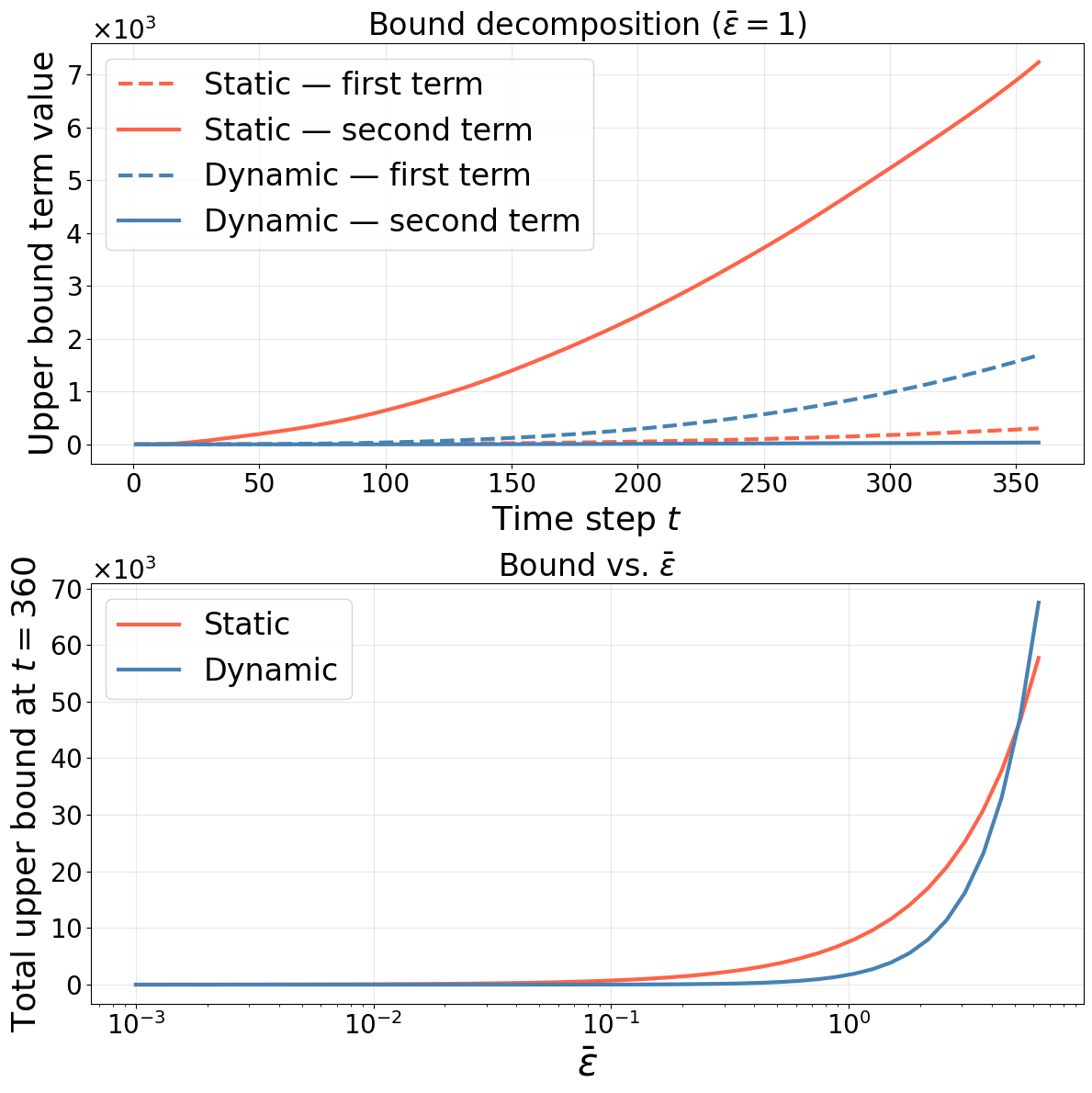}
    \caption{(Top) Decomposition of the cost deviation bound; (Bottom)  Total cost 
deviation bound $\sum_t G_t^2 + 2\sqrt{J(\theta_{0:T}) \, \sum_t G_t^2}$ 
evaluated at $t = H$ as a function of perturbation size $\bar{\varepsilon}$ }
    \label{fig:rob_ub}
\end{figure}

Now we turn to the robustness bound from Theorem \ref{thm:cost_deviation}. The resulting values of $M$ for the static and dynamic parameters are 2.55 and 6.04, respectively. At first glance, the larger value of $M$ for the dynamic parameters may appear to suggest that the time-varying calibration is less robust. However, $M$ alone does not determine the tightness of the bound. In fact, the top plot of Figure \ref{fig:rob_ub} shows that when decomposing the right hand side (RHS) of the robustness upper bound in \eqref{eq:cost_deviation_bound} for $\bar\varepsilon = 1$, the second term is the one that dominates throughout the horizon, emphasizing the larger role of the nominal calibration error $J(\theta^*)$ in bounding the perturbation cost deviation. The full RHS of \ch{the} cost deviation bound is plotted in the bottom panel of Figure~\ref{fig:rob_ub} for both the static and dynamic parameters across a range of perturbation sizes $\bar{\varepsilon}$. The bound is tighter for the dynamic parameters across most values of $\bar{\varepsilon}$, driven by the lower nominal cost $J(\theta^*) \leq J(\bar \theta)$. However, we see that the static parameter bound overtakes that of the dynamic parameters when $\bar\epsilon$ becomes too large, likely indicating that Assumption $\ref{ass:linearization}$ is breaking. \sr{This confirms the implication of Theorem~\ref{thm:cost_deviation}: in the small-perturbation regime, models with lower nominal cost yield a tighter cost-deviation upper bound, indicating reduced sensitivity to these boundary condition perturbations.}



\subsection{Case Study II: I-24 MOTION Testbed}
\label{subsec:casestudy_ii}

Aligned with \cite{raghavan2026dynamic}, we use traffic data for the westbound corridor in the I-24 MOTION INCEPTION v1.0.0 dataset. The scenario used extends from 8am to 9am on Wednesday, Nov. 30th, 2022. The source of the data is 276 cameras over 4 miles of interstate I-24 in Tennessee, USA. This footage is processed into vehicle trajectory data \cite{gloudemans202324}, which we then convert into macroscopic values based on Edie’s definition and adjust to the level of detail required by our METANET model. We use $L = 0.4$ kilometers and $T = 10$ seconds.
Parameters were tuned for 14 segments, covering 5.6 km of highway. We selected the data because of the stop-and-go traffic waves present on I-24 in this rush hour scenario. To compensate for sensor inaccuracies, boundary conditions were smoothed.

\sr{Figure \ref{fig:rob_ubi24} shows that the cost-deviation bound remains tighter for the dynamic parameters than for the static parameter vector across the range of $\bar\varepsilon$ considered, mirroring the synthetic result in Figure~\ref{fig:rob_ub}. This is an indication that Theorem~\ref{thm:cost_deviation} extends beyond the setting in which its assumptions are exactly satisfied, since real I-24 data may violate the first-order approximation in Assumption~\ref{ass:linearization}.}

\begin{figure}[tbp]
    \centering
    \includegraphics[width=0.47\textwidth]{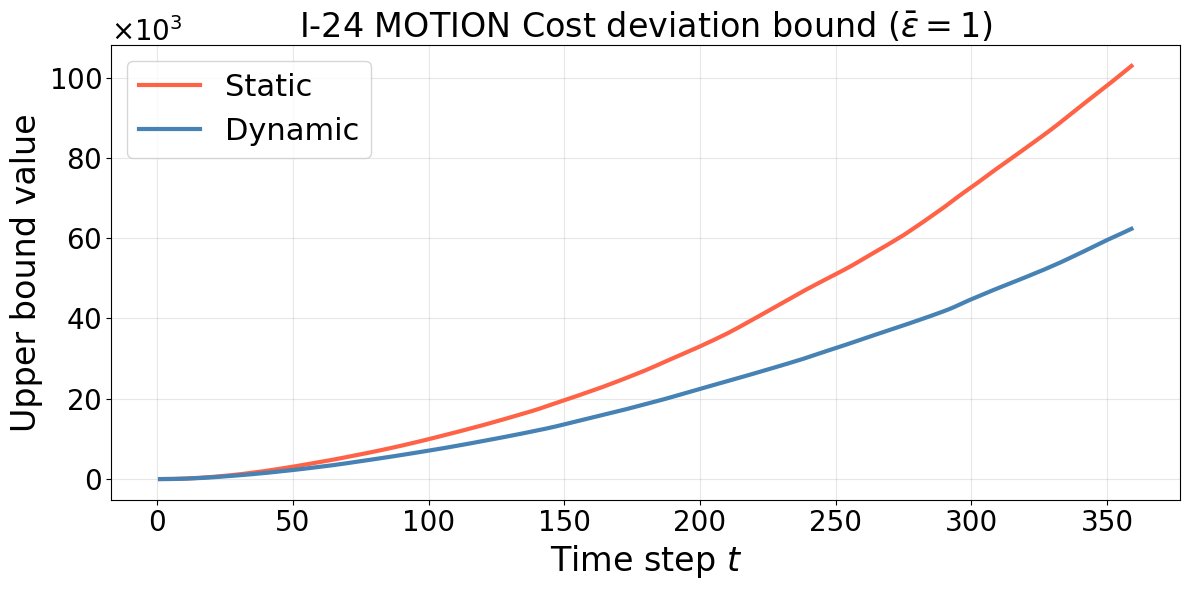}
    \caption{Cost deviation upper bound for dynamic vs static parameters over the 1-hour time horizon for I-24 MOTION data.}
    \label{fig:rob_ubi24}
\end{figure}

\section{Conclusion}
\label{sec:conclusion}
This work provides a theoretical explanation for the empirically observed 
robustness of dynamic METANET calibration~\cite{raghavan2026dynamic}. We formalized the sensitivity of calibrated models to exogenous boundary perturbations via string stability analysis, derived an analytical cost deviation bound that is provably tighter for optimal time-varying parameters than for any static parameters, and showed that the performance advantage of dynamic calibration grows monotonically with ground truth state variance. These findings were validated on both a synthetic scenario and real-world I-24 MOTION data, confirming that dynamic calibration yields models that are simultaneously more accurate and more robust to data perturbations.

\ch{Future work could investigate perturbations on other boundary inputs, expand the string stability analysis to account for time-varying inflows and equilibria induced by dynamic parameter updates, and develop improvements that concentrate the calibration \sr{optimization} process in more robust parameter spaces.}

\balance 
	\section*{ACKNOWLEDGMENTS}
	This work was done with the support of the NSF Graduate Research Fellowship under Grant No. 2141064. \ch{The authors acknowledge the use of large language models (LLMs) for assistance in copyediting and drafting. The authors have reviewed and revised all material generated with AI and accept full responsibility for the final content of this publication.}
	\bibliographystyle{IEEEtran}
	\bibliography{root} 
	
\end{document}